\documentclass[12pt]{amsart}  

\usepackage{graphicx}
\usepackage[bookmarks=false]{hyperref}
\usepackage{stmaryrd}
\usepackage{graphicx}
\usepackage{amsmath}
\usepackage{alltt}
\usepackage{mathrsfs} 
\usepackage{amsfonts}
\usepackage{amssymb}%
\usepackage{setspace}
\setcounter{MaxMatrixCols}{30}
\usepackage{amsthm}
\usepackage{amscd}
\usepackage{url}
\usepackage[all,knot,poly]{xypic} 
\usepackage{endnotes}
\usepackage[margin=1.5in]{geometry} 
\usepackage{tikz}
\usetikzlibrary{plotmarks}

\usetikzlibrary{decorations.markings}

\setlength{\parskip}{2mm plus1mm minus1mm}

\usepackage{enumerate} 
\usepackage{comment} 

\newcommand{\mcC}{\mathcal{C}}

\newcommand{\mcM}{\mathcal{M}}

\newcommand{\mcV}{\mathcal{V}}

\newcommand{\scrR}{\mathscr{R}} 
 
\newcommand{\scrT}{\mathscr{T}}

\newcommand{\R}{\mathbb{R}} 
\newcommand{\N}{\mathbb{N}} 

\newcommand{\ZZ}{\mathbb{Z}}

\newcommand{\ra}{\rightarrow}

\newcommand{\id}{\operatorname{id}}

\newcommand{\union}{\cup}


\newcommand{\Hom}{\operatorname{Hom}}

\newcommand{\Set}{{\rm Set}}

\newcommand{\NA}{\operatorname{\tiny NA}}



\theoremstyle{plain}
\newtheorem{theorem}{Theorem}[section]

\newtheorem{proposition}[theorem]{Proposition}

\newtheorem{observation}[theorem]{Observation}

\theoremstyle{definition}

\theoremstyle{definition}
\newtheorem{defn}[theorem]{Definition}
\theoremstyle{definition}

\theoremstyle{definition}
\newtheorem{example}[theorem]{Example}
\theoremstyle{definition}

\specialcomment{issue}
{\begingroup\color{red}\footnotesize}{\endgroup}

\excludecomment{baa}
\excludecomment{issue} 
\excludecomment{comment}


\tikzstyle{dom}=[rectangle,minimum height=20pt,minimum width=18pt,draw, fill=blue!20]
\tikzstyle{cod}=[minimum height=20pt,minimum width=18pt,draw, fill=red!20]

\bibliographystyle{custom}

\title[Contextuality from missing and versioned data]{
  Contextuality from missing and versioned data}

\author[Jason Morton]{Jason  Morton${}^*$} 
\thanks{${}^*$Departments of Mathematics and Statistics, Pennsylvania State University. Supported by AFOSR Grant FA9550-16-1-0300}

\begin{document}
\begin{abstract} 
Traditionally categorical data analysis (e.g. generalized
linear models) works with simple, flat datasets akin to a single table
in a database with no notion of missing data or conflicting versions. In contrast, modern data analysis must deal with distributed
databases with many partial local tables that need not always agree. The computational agents tabulating these tables are spatially separated, with binding speed-of-light constraints and data arriving too rapidly for these distributed views ever to be fully informed and globally consistent.   Contextuality is a mathematical property which describes a kind of inconsistency arising in quantum mechanics (e.g. in Bell's theorem).  In this paper we show how contextuality can arise in common data collection scenarios, including missing data and versioning (as in low-latency distributed databases employing snapshot isolation). In the companion paper, we develop statistical models adapted to this regime.
\end{abstract}
\maketitle


\section{Introduction}

In this article we show that contextuality formally identical to the quantum weirdness of Bell's Theorem can arise when we analyze a database which is versioned or has missing data.  We call the increasingly common regime in which this can occur the {\em slow inconsistent regime}.

By {\bf slow}, we mean that analysis happens on the same timescale in which information is collected and transmitted.
By {\bf inconsistent}, we mean that we embrace the possibility that agents, views, factors, or marginal tables may have irreconcilably inconsistent views of the world.  This could occur because the data collection and analysis is done by multiple computational agents spatially distributed, who because analysis is slow never reach consensus.  It could occur because there is missing data, or because analysis is being done on snapshots that include stale records. 
By {\bf statistics}, we mean that we nevertheless want to do inference, make predictions and decisions, and fit models 
in this setting, ideally with quantified uncertainty and guarantees.

The input to a statistical analysis is typically a data frame such as Table \ref{table:perfectrows}.  Categorical data analysis (Agresti \cite{agresti2013categorical}) studies these data frames, or the contingency tables that summarize them.  

\begin{table}[h]
{\small
    \begin{tabular}{l l l l}
    {2014-09-17T00:00:30.750} & Red & Hi & Good \\
    {2014-09-17T00:00:31.152} & Red & Low & Good \\
    {2014-09-17T00:00:33.152} & Red & Hi & Good \\
    {2014-09-17T00:00:39.112} & Green & Low & Bad \\
    & $\vdots$ & &\\
  \end{tabular}\caption{\label{table:perfectrows}}
  }
\end{table}

A data frame might result from a select and is close to the notion of a table in a relational database. 
Traditional statistical analysis assumes that we have one perfect table in one perfect database.  A sometimes complex pipeline has combined all information into one table with no missing data, and rows which are timeless (i.i.d.) or exhibit linearly ordered time (as in a time series).  Such ideal data has been a favorite of statisticians for a century but is losing market share to more complex machine-generated data streams.

After summarizing the data in a contingency table, we might analyze with 
generalized linear models.  Workarounds when assumptions fail tend to involve ``perfecting'' the data first.  Disparate sources of data are matched and combined. An analyst might impute missing values or simply throw away incomplete rows.  This introduces bias, because there is no such thing as missing data -- only misspecified models that assume missingness is impossible, and that the pipeline perfecting the data does not affect the validity of the analysis.

A common estimate is that 80-90\% of the work of data analysis consists of cleaning and manipulating the data, joining tables, getting rid of missing data and so on.  At the end of this pipeline which distills a perfect table (e.g. a table of counts), statistical analysis is performed which treats the distillate as a simple draw from a straightforward distribution, bravely assuming that no bias or error was introduced in the long pipeline that constituted the majority of the work.  We aim to push statistics further back into the pipeline, by developing techniques that can be applied more directly to the data as it is collected and where it lies, with fewer constraints, assumptions and distillation steps.

In the slow, inconsistent regime, we often face structurally missing data, with missingness patterned by the distributed nature of the data processing activity.  Generally {\em every} row has missing data, because no agent or node is privy to the complete state of the universe.  Such omniscience is impossible given the finite speed of light and physical constraints on processing power. 
Thus we might replace the timestamps in Table \ref{table:perfectrows} with time intervals or (interval, location) spacetime eventstamps that represent our uncertainty, and allow for a pattern of NAs in the rows.


The remainder of this article is organized as follows.  We begin with two easy-to-understand motivating examples in Sections \ref{sec:missing2contextual} and \ref{sec:snapshot2contextual}, and then explain the mathematical details that describe what is happening in these examples and permit precise definitions. The examples show how contextuality identical to that in Bell's Theorem  can easily arise without quantum effects,  contrary to the assumptions of much of classical statistics and data analysis. In reading the examples, the reader may want to refer forward for the occasional definition.  

First, in Section \ref{sec:missing2contextual} we show how contextual models can arise solely from missing data in a data table or a database.  Thus we can expect contextuality to appear when there is missing data, and rows with missing data are skipped in computing summaries. 
Second, in Section \ref{sec:snapshot2contextual} we show how contextuality can arise from write skew in a database using snapshot isolation, even without missing data and under various consistency constraints.  Thus we can expect contextuality to appear in modern distributed systems, especially those under time pressure to perform writes and reads without waiting for global consensus.

Next we detail our data model.  Section \ref{sec:time} sketches a simplified model of spacetime; it features partially ordered events and an interval time model. Interval time is implemented in existing distributed database systems, but requires strong guarantees bounding a clock's divergence from a common reference clock. Our model can serve as a practical model for the causal structure underlying the version poset used in Sections \ref{sec:snapshot2contextual} and \ref{sec_contextual}.


Section \ref{sec_formaltables} develops our single-table data model, which allows for tables (with indexed data items) rather than relations, missing data, event-versioned items, and prepares us for measurement contexts.

In Section \ref{sec_contextual} we build on the sheaf-theoretic view of Abramsky and coauthors \cite{abramsky2011sheaf,abramsky2011cohomology,abramsky2013relational,abramsky2015contextuality,abramsky2016quantifying}, using presehaves of tables or sections of a presheaf to model a database with many tables, subject to some consistency constraints. This provides a definition of contextuality in terms of the absence of a global section.  Note that contextuality has appeared in other settings including cognitive science \cite{narens2014alternative}.

Having established the prevalence of contextual models, in the companion paper we address three related problems this suggests: how to quantify contextuality, how to extend statistical models to the slow inconsistent regime, and how to fit these models in the presence of contextuality. 




\section{Contextual models from missing data} \label{sec:missing2contextual}







Let $A,B,A',B'$ be binary variables, and consider the following set of four marginal $\ZZ$-relations (contingency tables), chosen because their empirical distributions (1/8 times each table) give the Bell family.

$m_{AB} = \bordermatrix{~ & b_0 & b_1 \cr
              a_0 & 4 & 0 \cr
              a_1 & 0 & 4 \cr}$,
$m_{A'B} = \bordermatrix{~ & b_0 & b_1 \cr
              a'_0 & 3 & 1 \cr
              a'_1 & 1 & 3 \cr}$,
$m_{AB'} = \bordermatrix{~ & b'_0 & b'_1 \cr
              a_0 & 3 & 1 \cr
              a_1 & 1 & 3 \cr}$, and 

$m_{A'B'} = \bordermatrix{~ & b'_0 & b'_1 \cr
              a'_0 & 1 & 3 \cr
              a'_1 & 3 & 1 \cr}.$

Note that these $m$ are a compatible family of local sections (Def.\ \ref{def_compat_family}) for a presheaf of $\mathbb{Z}$-relation-spaces (Def. \ref{def_presheaf_relation_spaces}); the intersections are all single-variable with marginal $(4,4)$. Could these glue to a global relation, or be produced by summary maps $\pi_{AB}$, $\pi_{A'B}$, $\pi_{AB'}$, and $\pi_{A'B'}$?   

There exists no global table without missing data that $\pi$-projects to these marginal tables.  Because the records must be complete (have a definite value for each of the four variables), there must be exactly eight rows.  In four rows, we must have $A\!=\!0$ and $B\!=\!0$, i.e.\ the table must look like this (with * in unknowns):

\begin{center}
\begin{tabular}{c|cccc}
  index & A & B & A' & B' \\
  \hline
  1   & 0 & 0 & * & * \\
  2   & 0 & 0 & * & * \\
  3   & 0 & 0 & * & * \\
  4   & 0 & 0 & * & * \\
  5   & 1 & 1 & * & * \\
  6   & 1 & 1 & * & * \\
  7   & 1 & 1 & * & * \\
  8   & 1 & 1 & * & * \\
\end{tabular}.
\end{center}

Turning now to marginal table $m_{AB'}$, in three of the cases that $A=0$, we must have that $B'=0$, and in one that $B'=1$.  Similarly in three of the cases that $A=1$, we must have that $B'=1$, and in one that $B'=0$.  We can assign these freely:

\begin{center}
\begin{tabular}{c|cccc}
  index & A & B & A' & B' \\
  \hline
  1   & 0 & 0 & $x_1$ & 0 \\
  2   & 0 & 0 & $x_2$ & 0 \\
  3   & 0 & 0 & $x_3$ & 0 \\
  4   & 0 & 0 & $x_4$ & 1 \\
  5   & 1 & 1 & $x_5$ & 1 \\
  6   & 1 & 1 & $x_6$ & 1 \\
  7   & 1 & 1 & $x_7$ & 1 \\
  8   & 1 & 1 & $x_8$ & 0 \\
\end{tabular}.
\end{center}

Considering the marginal table $m_{A'B}$, in three of the cases that $B=0$, we must have that $A'=0$, and in one that $A'=1$.  Similarly in three of the cases that $B=1$, we must have that $A'=1$, and in one that $A'=0$.  Then
\[
(C1)\; x_1 + x_2 + x_3 + x_4 = 1\;\; \text{and} \;(C2)\; x_5 + x_6 + x_7 + x_8 = 3.
\]

On the other hand considering $m_{A'B'}$, in three of the cases that $B'=0$, we must have that $A'=1$, and in one that $A'=0$.  Similarly in three of the cases that $B'=1$, we must have that $A'=0$, and in one that $A'=1$.  Then
\[
(C3)\; x_1 + x_2 + x_3 + x_8 = 3\;\; \text{and} \;(C4)\; x_4 + x_5 + x_6 + x_7 = 1.
\]
Then subtracting  $(C3) - (C1)$ we have $x_8 - x_4 = 2$, which is impossible because $x_4$ and $x_8$ are both either $0$ or $1$.

We can also see this by passing to tables of empirical marginal probabilities, and applying Bell's inequalities.

However consider the enlarged Table \ref{table:BellNA} of records that includes missing data.

\begin{table}[h]
\begin{tabular}{c|cccc}
  index & A & B & A' & B' \\
  \hline
  1   & 0 & 0 & 0 & 0 \\
  2   & 1 & 1 & 1 & 1 \\
  3   & 0 & 0 & 0 & \tiny{NA} \\
  4   & 1 & 1 & \tiny{NA} & 1 \\
  5   & \tiny{NA} & 0 & 1 & 0 \\
  6   & 1 & \tiny{NA} & 1 & 0 \\
  7   & \tiny{NA} & \tiny{NA} & 1 & 0 \\
  8   & \tiny{NA} & 1 & 0 & 1 \\
  9   & 0 & \tiny{NA} & 0 & 1 \\
  10  & \tiny{NA} & \tiny{NA} & 0 & 1 \\
  11  & 1 & 1 & 1 & \tiny{NA} \\
  12  & 0 & 0 & \tiny{NA} & 0 \\
  13  & 0 & 0 & 0 & \tiny{NA} \\
  14  & 1 & 1 & 1 & \tiny{NA} \\
  15  & 0 & \tiny{NA} & \tiny{NA} & 0 \\
  16  & 1 & \tiny{NA} & \tiny{NA} & 1 \\
\end{tabular}\caption{ A table with missing data. Applying available-case analysis to the table to estimate the distributions in contexts $C_{AB},C_{AB'}, C_{A'B}, C_{A'B'}$ yields a Bell family.
  \label{table:BellNA}}
\end{table}

A typical method of dealing with missing data is simply to throw it out (available-case analysis, see Observation \ref{obs_avail}).  In other words, to compute the marginal table $m_{AB}$, we use the $\pi_{AB} \circ \tau_{AB}$ map which first restricts to rows in which both $A$ and $B$ are not $NA$, and then sums these to produce the summary table.  Applying this procedure to Table \ref{table:BellNA}, we obtain exactly the desired marginals, $m_{AB}$, $m_{A'B}$, $m_{AB'}$, and $m_{A'B'}$ (although in this example the single-variable marginals only have correct proportions).  Of course the same is true if we ask for proportions.  Thus we have shown the following.

\begin{proposition}
Categorical data with missing data can result in inconsistent marginal counts and proportions identical to those that arise from quantum nonlocality.
\end{proposition}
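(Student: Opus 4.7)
The plan is to proceed by exhibiting Table~\ref{table:BellNA} as the explicit witness and verifying two independent facts about it: (i) the four contingency tables produced by available-case analysis on the table are exactly $m_{AB}, m_{A'B}, m_{AB'}, m_{A'B'}$ (at least up to a common scaling, i.e.\ as empirical distributions), and (ii) no global (complete-record) table can simultaneously $\pi$-project to this family. Combined, (i) and (ii) say that the table's available-case marginals are inconsistent in the Bell sense, proving the proposition.

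For (i), the work is purely bookkeeping. For each context $C \in \{\{A,B\},\{A,B'\},\{A',B\},\{A',B'\}\}$, I would restrict to the rows of Table~\ref{table:BellNA} in which both variables of $C$ are non-NA, and tally the resulting $2\!\times\!2$ pattern. Concretely, the rows with both $A$ and $B$ defined (rows 1,2,3,4,11,12,13,14) split as four $(0,0)$ and four $(1,1)$, giving $m_{AB}$; analogous enumerations for the other three contexts reproduce $m_{A'B}$, $m_{AB'}$, and $m_{A'B'}$ (up to overall normalization on the single-variable margins, which only need match in proportions, as the paper already notes). As a sanity check one can verify, context by context, that the single-variable margins have the prescribed proportion $1\!:\!1$, consistent with a compatible family of local sections in the sense of Def.~\ref{def_compat_family}.

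For (ii), no new argument is required: the paragraphs immediately preceding the proposition have already shown, by introducing indeterminates $x_1,\ldots,x_8$ for the unknown $A'$-entries and extracting linear constraints $(C1)$--$(C4)$ from the four marginal tables, that any putative global table forces $x_8-x_4=2$ while $x_4,x_8\in\{0,1\}$. Equivalently, passing to empirical proportions (dividing each marginal by $8$) yields exactly the CHSH/Bell correlations, which violate Bell's inequalities and hence admit no local hidden-variable explanation. These are precisely the correlations arising in quantum nonlocality, so the marginal counts/proportions produced by available-case analysis on our table are of the form arising from quantum nonlocality while nevertheless being generated by a classical (though incomplete) data source.

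The main obstacle is really only the accuracy of the tally in step (i); there is no conceptual difficulty beyond the case enumeration. The table was designed so that the NA pattern is asymmetric across contexts in precisely the way needed to reproduce the Bell marginals, and it is the available-case rule itself (rather than any property of the underlying records) that introduces the contextuality. A cleaner conceptual version of the argument, for exposition, would be to phrase (i) in terms of the composite $\pi_C \circ \tau_C$ defined later (as referenced in Observation~\ref{obs_avail}), and to note that the restriction maps $\tau_C$ for different $C$ select genuinely different sub-populations of rows --- this is the source of the apparent inconsistency, and the reason a single global table cannot be responsible.
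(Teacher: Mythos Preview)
Your proposal is correct and follows exactly the paper's approach: the proposition is proved by the construction immediately preceding it, namely Table~\ref{table:BellNA} together with the $x_1,\dots,x_8$ impossibility argument, and you have correctly identified both pieces and even verified the row enumeration for the $AB$ context. There is nothing to add; the paper itself offers no separate proof environment for this proposition, treating the preceding example as the demonstration.
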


Consequently, if we consider a model that is fit to such data using only summary data from marginal tables (sufficient statistics), the possibility arises that there is no global joint.  In the sequel we develop a generalized notion of exponential family model that is adapted to this scenario.

\section{Contextual models from write skew, snapshot isolation and multiversion concurrency control}  
\label{sec:snapshot2contextual}








\begin{example} \label{example_versionbell}

  Consider a family of local sections with contexts $C_{AB}$, $C_{AB'}$, $C_{A'B}$, $C_{A'B'}$ produced by the versioned Table \ref{tab_versionsbell} treated as a global section, with $\omega(C_{AB})=5$, $\omega(C_{A'B})=2$, $\omega(C_{AB'})=3$, and $\omega(C_{A'B'})=4$.

  \begin{table}[hb]
\begin{tabular}{cc|cccc}
  version & index & A & B & A' & B' \\
  \hline
  1 & 1   & 0 & 0 & 0 & 0 \\
  1 & 2   & 0 & 0 & {\bf 0} & 0 \\
  1 & 3   & 0 & {\bf 1} & 0 & {\bf 1} \\
  1 & 4   & {\bf 0} & 1 & 0 & 1 \\
  1 & 5   & {\bf 1} & 0 & 1 & 0 \\
  1 & 6   & 1 & {\bf 0} & 1 & {\bf 0} \\
  1 & 7   & 1 & 1 & {\bf 1} & 1 \\
  1 & 8   & 1 & 1 & 1 & 1 \\
  \hline
  2 & 3   & 0 & {\bf 0} & 0 & 1 \\
  2 & 6   & 1 & {\bf 1} & 1 & 0 \\
  \hline
  3 & 3   & 0 & 1 & 0 & {\bf 0} \\
  3 & 6   & 1 & 0 & 1 & {\bf 1} \\
  \hline
  4 & 2   & 0 & 0 & {\bf 1} & 0 \\
  4 & 7   & 1 & 1 & {\bf 0} & 1 \\
  \hline
  5 & 4   & {\bf 1} & 1 & 0 & 1 \\
  5 & 5   & {\bf 0} & 0 & 1 & 0 \\
\end{tabular}
\caption{A table with versioned edits. In this eight-record, five version table, the version partial order is the tree $(1 (2(5))(3)(4))$. From a common version 1, edits 2, 3, and 4 are concurrent.  Edit 2 swaps the state of variable $B$ in records 3 and 6, while edit 3 swaps the state of variable $B'$ in records 3 and 6; edit 4 swaps the state of variable $A'$ in records 2 and 7; and edit 5 follows edit 2, swapping the states of variable $A$ in records 4 and 5. Edited states are bolded.  Contexts are assigned versions $\omega(C_{AB})=5$, $\omega(C_{A'B})=2$, $\omega(C_{AB'})=3$, and $\omega(C_{A'B'})=4$. From the point of view of these four observers (see Table \ref{tab_version_sharded}), the system is a Bell family.  \label{tab_versionsbell}}
  \end{table}

  Each edit maintains the invariant that the marginal count of any single variable has four zeros and four ones by swapping two variables, so the compatibility condition is maintained at the level of counts.

The concurrent persepctive of four agents, one responsible for each context, can be described by a $\pi$-compatible concurrent snapshot $T^\omega$ (Def. \ref{def_pi_compcon_snap}).  The four tables $T^\omega(\{A,B\}) = T_{\leq E_5}(AB)$, $T^\omega(\{A,B'\})$, $T^\omega(\{A',B\})$, and $T^\omega(\{A',B'\})$ are given in in Table \ref{tab_version_sharded}.  Note that $T^\omega$ gives a compatible family of local sections of a presheaf of table spaces (so also compatible presheaves of tables), because conflicts are resolved by version numbers.  Forgetting the version numbers, we get disagreement on indexed overlaps; forgetting indices as well, we recover agreement on overlaps.

  For example variable $A$, index 4 in $T^\omega(\{A,B\})$ has $\sigma_A = 1$ while in $T^\omega\{A,B'\}$ it has $\sigma_A = 0$ (but the version numbers are different). As an unindexed multiset, or eqivalently passing to the summary $\N$-relations, we obtain again a compatible family (this was enforced by only using swap operations that maintain the invariant).

  Each snapshot is simply an $8$-record data table with no missing data.  Nevertheless, $\frac{1}{8}\pi_\phi T^\omega$ is the Bell family. Summarizing $T_{\leq E_5}(AB)$, $T_{\leq E_2}(A'B)$, $T_{\leq E_3}(AB')$, and $T_{\leq E_4}(A'B')$ with $\pi_{AB}$, $\pi_{AB}$, $\pi_{AB}$, and $\pi_{AB}$ to obtain relations, we obtain
  
 $R_{\leq E_5}(AB) = \bordermatrix{~ & b_0 & b_1 \cr
              a_0 & 4 & 0 \cr
              a_1 & 0 & 4 \cr}$,  
$R_{\leq E_2}(A'B) = \bordermatrix{~ & b_0 & b_1 \cr
              a'_0 & 3 & 1 \cr
              a'_1 & 1 & 3 \cr}$
  
$R_{\leq E_3}(AB') = \bordermatrix{~ & b'_0 & b'_1 \cr
              a_0 & 3 & 1 \cr
              a_1 & 1 & 3 \cr}$, and 
$R_{\leq E_4}(A'B') = \bordermatrix{~ & b'_0 & b'_1 \cr
              a'_0 & 1 & 3 \cr
              a'_1 & 3 & 1 \cr}.$

    \begin{table}[htb]
\begin{tabular}{cc|cccc}
  version & index & A & B\\
  \hline
  1 & 1   & 0 & 0\\
  1 & 2   & 0 & 0\\
  2 & 3   & 0 & {\bf 0}\\
  5 & 4   & {\bf 1} & 1\\
  5 & 5   & {\bf 0} & 0\\
  2 & 6   & 1 & {\bf 1}\\
  1 & 7   & 1 & 1\\
  1 & 8   & 1 & 1\\
\end{tabular}$\qquad$
\begin{tabular}{cc|cccc}
  version & index & B & A'\\
  \hline
  1 & 1   & 0 & 0\\
  1 & 2   & 0 & {\bf 0}\\
  2 & 3   & {\bf 0} & 0\\
  1 & 4   & 1 & 0\\
  1 & 5   & 0 & 1\\
  2 & 6   & {\bf 1} & 1\\
  1 & 7   & 1 & {\bf 1}\\
  1 & 8   & 1 & 1\\
\end{tabular}\\
\begin{tabular}{cc|cccc}
  version & index & A & B'\\
  \hline
  1 & 1   & 0 & 0 \\
  1 & 2   & 0 & 0 \\
  3 & 3   & 0 & {\bf 0} \\
  1 & 4   & {\bf 0} & 1 \\
  1 & 5   & {\bf 1} & 0 \\
  3 & 6   & 1 &  {\bf 1} \\
  1 & 7   & 1 & 1 \\
  1 & 8   & 1 & 1 \\
  \hline
\end{tabular}$\qquad$
\begin{tabular}{cc|cccc}
  version & index & A' & B'\\
  \hline
  1 & 1   & 0 & 0 \\
  4 & 2   & {\bf 1} & 0 \\
  1 & 3   & 0 & {\bf 1} \\
  1 & 4   & 0 & 1 \\
  1 & 5   & 1 & 0 \\
  1 & 6   & 1 & {\bf 0} \\
  4 & 7   & {\bf 0} & 1 \\
  1 & 8   & 1 & 1 \\
\end{tabular}

\caption{The snapshots $T^\omega(\{A,B\}) = T_{\leq E_5}(AB)$, $T^\omega(\{A',B\}) = T_{\leq E_2}(A'B)$, $T^\omega(\{A',B\}) = T_{\leq E_5}(A'B)$ and $T^\omega(\{A',B'\}) = T_{\leq E_2}(A'B')$ taken from Table \ref{tab_versionsbell}. \label{tab_version_sharded}}
  \end{table}

\end{example}


\section{Partially ordered events in a graph-interval space-time} \label{sec:time}


Despite being the fastest thing in the universe, to a processor core light is slow: it travels only about one foot per nanosecond.  In that time a computer might complete four cycles with each of its many cores.  In any real system, information usually travels slower still, as a beam of light bounces down a fiber, signals are packetized and attenuated, latency creeps in throughout the application stack, or a human is in the loop.

Traditional statistics happens on data that is generated and collected very slowly and carefully relative to the analysis process.  Real-time statistics on data generated by networked computers is more like a person trying to hand-analyze data being collected on various star systems spread throughout the local galactic neighborhood.  Each agent never really knows what is going on; by the time it does, it might not be relevant anymore; and every planet had a different, wildly out-of-date, view of the universe's state.

One useful idea for formalizing this situation is the {\em causet}.  A partial order $X$  is {\em locally finite} if for all $x,y \in X$ the cardinality of $\{x : x \leq z \leq y\}$ is finite.  A Lorentzian manifold is {\em weakly causal} if it contains no closed timelike curves.  Then a {\em causet} is a locally finite partial order which can be embedded in a weakly causal Lorentzian manifold.

In such a partial order, the points are spacetime events.  The relation $E_1 \leq E_2$ means that  $E_1$ is to the past of $E_2$ and so can cause or predict it.  Non-comparable events in the order are ``spacelike,'' and are simultaneous in some reference frame.


Unfortunately, it gets worse.  The partial order on events in a networked system is unknown and unknowable on a small timescale because the clocks themselves are inaccurate.  What we actually are faced with is a kind of blurry causet, with only a probability of one event $E_1$ being to the past of $E_2$.  This probability approaches one quickly as the delay between the events increases, but it is a serious issue inside the error of the clocks.


The traditional idea of a ``time series'' requires looking only at one point in space (varying over time) and/or making unwarranted assumptions about total ordering.  A more appropriate time series model in this context will look at the (likely truncated) past lightcone $\{F: F \leq E\}$  of an event $E$.  

\subsection{Interval model of time; resulting partial order}
We adopt the Corbett et al. \cite{corbett2013spanner} approach to partially ordered time, called TrueTime at Google.  This is to assume that clocks have some divergence $\epsilon$ from UTC (a shared global inertial frame) and treat each event timestamp as an interval $[t_s, t_e]$ such that the true UTC time $t_{UTC}$ is guaranteed to be in the interval.  While not a perfect representation of uncertainty (the guarantee could fail, and we might have more information about the probability density of the location of the true time inside the interval), it represents a good compromise.

We assume that the system under consideration involves finitely many interacting agents.  By definition, time for each agent is linearly ordered; an agent is something like a computational core or Turing machine which must process events in a single stream.  Each agent can associate an interval timestamp to any event with the above guarantees.  Our events are approximations to points; if an event in the colloquial sense lasts say 10 seconds, it should be broken into two events in our sense, one $E_1$ for the beginning of the long-running event and another $E_2$ for the end.

We also assume that agents are separated in space, so that classical information requires time to travel from one agent to another, bounded by the speed of light.

Thus if Alice assigns stamp $[t_1,t_2]$ to event $E$ and Bob assigns stamp $[t_3,t_4]$ to event $F$, $t_2 < t_3$ implies $E$ happens before $F$ in the UTC frame.  If further the interval $t_3 -t_2$ exceeds the time required for light to travel from Alice to Bob, we say that $E$ is causal to $F$, or that $F$ lies in the future light cone of $E$.  This provides a partial order of strict causality.

If we only have that $t_1 < t_4$, we say that $E$ may have occurred before $F$ in the UTC frame, but if $t_3 \leq t_2$, the reverse may also be true.  If $t_4-t_1$  exceeds the time $d_{A,B}$ required for light to travel from Alice to Bob, we say that $E$ is possibly causal to $F$, or that $F$ might lie in the future light cone of $E$.  If $t_2 - t_3 < d_{A,B}$, the reverse relations also hold.

We can arrange the information travel times (in a general system, they may not be symmetric) in a matrix $d$ of nonnegative reals or a graph $\Gamma$ of communication link delays, where $d$ is obtained by all shortest paths.  Together with interval timestamps, this induces a strict causality partial order on events, in which $E < F$ means that we are certain there was enough time between the latest possible end of $E$ and the earliest possible start of $F$ for information to travel from the agent recording $E$ to the agent recording $F$.

The basic unit of analysis is a {\em patch} of spacetime (See Figure \ref{fig_patch_hourglass}), consisting of the data of a connection graph $\Gamma$ and travel time matrix $d_\Gamma$, a set of {\em eventstamps} which are interval-timestamped events for each node in the graph, and the resulting partial orders.  

\begin{figure}
  \includegraphics[clip,trim=5cm 15cm 5cm 15cm, width=5cm]{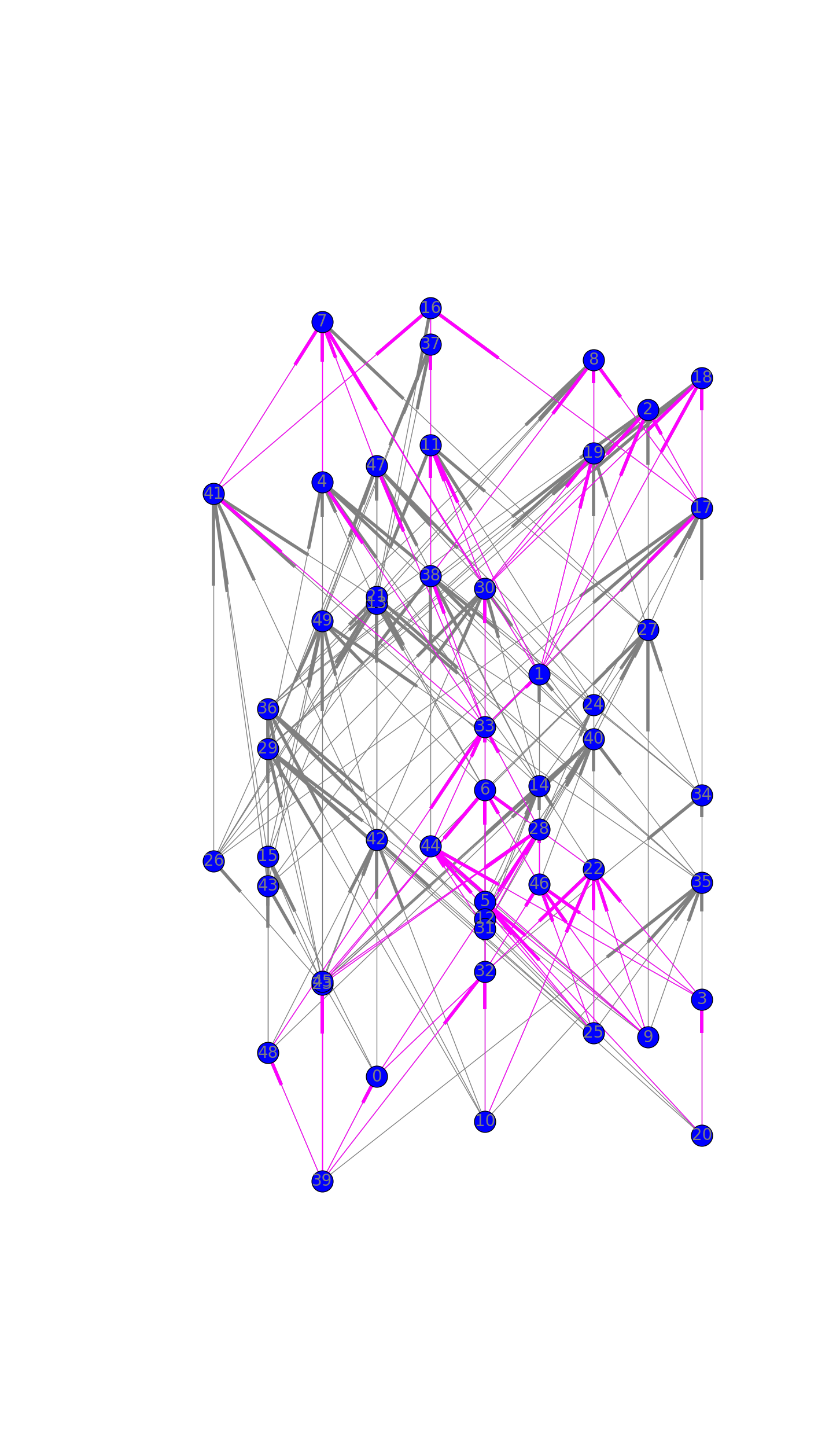}
  \caption{A small patch with a past and future lightcone of an event highlighted. \label{fig_patch_hourglass}}
\end{figure}

The eventstamps $v \in V$ in such a partial order will be used as version labels for the rows of our database tables.

\section{Tables, relations, restrictions, and summaries} \label{sec_formaltables}

For clarity, let us consider a common generalization of (1) the {\em measurement scenario} of quantum information and (2) the {\em hierarchical} or {\em loglinear} model of probabilistic graphical models.  We begin with variables (object variables) $X_1, \dots, X_n$, where $X_i$ has finite state space $\Sigma_i$.  For a subset $S \subset [n]$ define $\Sigma_S = \prod_{i \in S} \Sigma_i$.  We have {\em contexts}  $C_k \subset [n]$ or $\subset \{X_1, \dots, X_n\}$, grouped in a {\em measurement scenario} $\mcM = \{C_1, \dots C_K\}$; each context is a collection of variables. Conceptually, contexts are collections of variables which can be simultaneously observed.  For example, contexts might correspond to the maximal tables without stale or missing data, with rows assignments $\sigma_C \in \Sigma_C$ of $X_C$.  They also represent a cover of $[n]$.  We require contexts to be coatoms in the partial order by inclusion (so no context is a subset of another).  We also consider a topology on $[n]$, usually the discrete topology or the topology generated by the contexts.

For modeling, we further have factors $f_1, \dots, f_F$ with $f_j \subset \{X_1, \dots, X_n\}$. For now, we assume each factor is a subset of at least one context.  With this description, hierarchical models are the special case where there is only one context containing all the variables, while the measurement scenarios studied by Abramsky et al.\ and others are the special case where the factors and contexts exactly coincide (the model is saturated in each context).  
We will return to factors when we discuss models and model fitting.

\subsection{Relations, tables, versioned tables, and missing data}
Let $\mathbb{S}$ be a semiring. A function $R\!:\Sigma_S \ra \mathbb{S}$ is a semiring-valued relation.  For example $C\!:\Sigma_S \ra \mathbb{N}$ is called a {\em contingency table} \cite{agresti2013categorical} and might tabulate how many times each state $\sigma_S \in \Sigma_S$ was observed in a data set.  Similarly $P\!:\Sigma_S \ra \mathbb{R}_{\geq 0}$ with $\sum_{\sigma_S} P(\sigma_s )=1$ (e.g.\ if $P = C/\sum_{\sigma_S} C(\sigma_s )$) is a probability distribution. An element $\sigma_S \in \Sigma_S$ is a state and $(\sigma_S,s)$ with $s \in \mathbb{S}$ is a tuple in a relation on variables or ``columns'' $S\subset [n]$.

Note that we are identifying absence of the state $\sigma_S$ in the relation with the condition $R(\sigma_S)=0$; whether the relation is represented sparsely as a list of the states $\sigma_S$ with nonzero values, or densely as a function $R\!:\Sigma_S \ra \mathbb{S}$ is not specified.

For any partition $T \cup U = S$ with $T \cap U = \emptyset$ we can also view a $R\!:\Sigma_S \ra \mathbb{S}$ as a semiring-valued relation $R: \Sigma_T \times \Sigma_U \ra \mathbb{S}$ or a morphism $R\!:T \ra U$ (see e.g. \cite{morton2013generalized} for an application), a perspective useful for joins and other operations.  This is a useful and common category; for example the (monoidal) well-supported \cite{rosebrugh2005generic} compact closed category of complex-valued relations is equivalent to the well-supported compact closed category of complex finite-dimensional vector spaces and linear transformations, in which each vector space is equipped with an orthonormal basis.  Letting $\mathbb{S}$ be the Boolean semiring we obtain the category of sets and relations.

Alone, such a relation is not a good model of a database table, since we may require more than one copy of each state.  This is corrected by adding a unique index or primary key set to the relation.   
A semiring-valued relation $R\!:\Sigma_S \ra \mathbb{S}$ is also a functional (right-unique) relation in $\Sigma_s \times \mathbb{S}$: it identifies absence of a tuple with a zero assignment, and every state has exactly one semiring element.
In defining a table, we can maintain the surjective projection to $\Sigma_S$, which requires assigning an index even to $\sigma_S$ with value 0, or we can not.
This is a sparse vs.\ dense representation implementation issue, so for convenience we do assign such indices although there is a bit of awkwardness. 
That is, we have a relation $T \subset I \times \Sigma_S \times \mathbb{S}$ such that the projection $\pi_I\:T \ra I$ is injective (no two tuples in $T$ share the same index) and the projection to $\Sigma_S$ is surjective (every state has at least one index, at least conceptually). This implies that if $(i,\sigma_S, s)$ and  $(i,\sigma'_S, s')$ are both in $T$, $\sigma_S = \sigma'_S$ and $s=s'$. 


Versioning (as in the widely implemented multiversion concurrency control \cite{reed1978naming,bernstein1981concurrency} and snapshot isolation \cite{adya1999weak,cahill2009serializable,yabandeh2012critique,kulkarni2014logical}) is a useful tool for handling concurrent data reads and writes.  A data item represented by a pair $(i,\sigma_S)$ is allowed to have have multiple versions arranged in partial order which we require to be a causet. 
Let $V$ be a poset of versions (in particular, eventstamps from events in a causet patch as in Section \ref{sec:time}).

To define a {\em versioned table with columns $S$, index $I$, and version set $V$}, we extend the data table further to $T \subset V \times I \times \Sigma_S \times \mathbb{S}$.  We require instead of $\pi_I$ being injective that the projection $\pi_{V \times I}\!:T \ra V \times I$ is injective: no two tuples in $T$ share the same version and index.  Equivalently we are requiring that each transaction or commit $T_v$ is a table in the former sense; hence $T_v$ is a {\em change} to a table, with the snapshot $T_{\leq v}$ defined below being the changes at $v$ applied to the table state at the beginning of the transaction at $v$ . This implies that if $(v,i,\sigma_S, s)$ and  $(v,i,\sigma'_S, s')$ are both in $T$, $\sigma_S = \sigma'_S$ and $s=s'$. 

A {\em snapshot} $T_{\leq v}$ of a table at version $v$ contains all the $(u,i,\sigma_S,s)$ such that $u \leq v$ and $u$ is the maximal version such that $(i,\sigma_S,s) \in T_u$, i.e. it contains the latest version of all data items in the inclusive past of $v$ (the past lightcone/order ideal).

Finally let us add in the possibility of missing data: the state of any variable in any observation can be unknown, indeterminant, or uncollected.  One approach is to augment each variable's state space $\Sigma_j, j \in [n]$ with an $\NA$ symbol, setting $\tilde{\Sigma}_j \!=\! \Sigma_j \union \{\NA\}$.

These $\NA$s allow for arbitrary extensions of local state, can sometimes be cleaned by restriction, and are usually ignored or imputed in summaries.  
Let $S \subset U \subset [n]$. 
Define an extension map $\epsilon_{S \ra U}: \tilde{\Sigma}_S \ra \tilde{\Sigma}_U$ (also defined on $\Sigma_S$) by putting $\NA$ in all the slots in $U \setminus S$. Then we extend from (versioned) tables with columns $S$ to (versioned) tables with columns $U$ by sending $(i,\sigma_S, s)$ to $(i, \epsilon_{S \ra U} (\sigma_S), s)$ (versioned, $(v,i,\sigma_S, s)$ to $(v,i, \epsilon_{S \ra U} (\sigma_S), s)$).

We also have a restriction map of sets $\tau_{S \leftarrow U}\!: \tilde{\Sigma}_U \ra \tilde{\Sigma}_S$ which ignores the state of the variables we are not interested in.  Thus $\tau_{S \leftarrow U} \circ \epsilon_{S \ra U}$ is the identity map of sets on $\Sigma_S$, while $\epsilon_{S \ra U} \circ \tau_{S \leftarrow U}$ replaces the $U \setminus S$ states in a state $\sigma_U$ with $\NA$s.

Alternatively once can work without an $\NA$ symbol by talking only about extensions and restrictions, or using a mask, but these are typically more awkward. 

\subsection{Restrictions, extensions, and summaries of tables}
Let $S \subset U \subset [n]$,  and let $T$ and $T'$ be versioned tables with missing data, on columns $S$ and $U$ respectively. Define extension in terms of the operation on the state sets, 
\[
\epsilon_{S \ra U}(T) (v,i,\sigma_S,s) = T (v,i, \epsilon_{S \ra U} \sigma_S,s) 
\]
and restriction similiarly,
\begin{equation} \label{eq_tablerestriction}
\tau_{S \leftarrow U}(T') (v,i,\sigma_U,s) = T' (v,i, \tau_{S \leftarrow U} \sigma_U,s).
\end{equation}
Note that these operations do not affect the set of versions or indices in use, but could result in degenerate situations of various sorts such an index corresponding to a row which is entirely $\NA$.


A relation is distinguished from a table by having no index, so a unique semiring element for each state $\sigma_S$.  A {\em summary} of a table $T$ on columns $S$ produces a relation on $S$ from a table by summing over indices with the same state in the semiring, possibly changing semirings before summation. Thus a marginal relation of counts is obtained by combining a restriction with a summary.

Let $T$ be a table with missing data on columns $S$, and let  $\phi\!: \mathbb{S} \ra \mathbb{S}'$. Usually $\phi$ is an identity on $N$ or $\R$, $\mathbb{S} = \mathbb{B}$ and  $\mathbb{S}' = \mathbb{N}$ with $\phi({\rm True})=1$ and $\phi({\rm False})=0$, or $\phi: \N \ra \R_{\geq 0}$ divides by the total count to obtain probabilities.

Define the summary as a relation on $\Sigma_S$ by 
\begin{equation} \label{eq_summary_table2relation}
\pi_\phi(T)(\sigma_S) = \sum_{(i,\sigma'_S,s) \in T: \sigma'_S = \sigma_S} \phi \circ T(i,\sigma'_S).
\end{equation}

Since $\sigma_S \in \Sigma_S$,  $\sigma'_s \neq \sigma_S$ if $\sigma_S$ contains any $\NA$s, so that this summation operator $\pi$ {\bf skips any row with at least one $\NA$} in the specified columns $S$. The resulting $\pi_\phi(T)$ is a relation with no missing data.

\begin{observation}
This skip-$\NA$ method of dealing with missing data is the default for most analysts and software packages, although arguably not the best (see the survey \cite{allison2002missing}).  It is also called listwise deletion or complete case analysis.  Under the missing completely at random assumption (MCAR) in a single table it will not bias parameter estimates, because it is equivalent to taking a random subsample.  Allison \cite{allison2002missing} also observes that it is suprisingly robust to violations of MCAR, especially when what is missing are predictor variables in virtually any kind of regression.
\end{observation}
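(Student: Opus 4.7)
The plan is to make the observation precise as separate claims and justify each. First I would fix the single-table i.i.d.\ setup: each row is drawn from $\tilde{\Sigma}_{[n]} = \prod_{i} \tilde{\Sigma}_i$, and each row comes with a missingness pattern $M \in \{0,1\}^n$ indicating which coordinates equal $\NA$. Listwise deletion over columns $S$ then corresponds to conditioning on the event $\{M_i = 0 \text{ for all } i \in S\}$. The MCAR assumption is $M \indep (X_1,\dots,X_n)$, under which this conditioning leaves the joint law of $X_S$ unchanged; the complete-case subsample is an i.i.d.\ subsample whose size is binomial in the total sample size with success probability $\Pr(M_S = 0)$. This is the precise content of ``equivalent to taking a random subsample.''

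Second, for the unbiasedness claim, I would compute $\E[\pi_\phi(T)(\sigma_S)]$ from (\ref{eq_summary_table2relation}) with $\phi$ the identity on $\N$. A row contributes to $\sigma_S$ iff its $S$-coordinates are all present and equal $\sigma_S$; under MCAR this event has probability $\Pr(M_S = 0)\,\Pr(X_S = \sigma_S)$, so the raw count is an unbiased estimate of $N \Pr(M_S = 0) \Pr(X_S = \sigma_S)$ and, after normalizing by the complete-case count, converges to $\Pr(X_S = \sigma_S)$. Since this holds for every $\sigma_S \in \Sigma_S$, any sufficient statistics for loglinear or exponential-family models computed from $\pi_\phi(T)$ are consistent with the same limits they would have without missing data, and MLEs inherit consistency via the standard continuous-mapping argument.

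Third, for the robustness claim I would appeal to the conditional form of the regression likelihood. We model $p(Y \mid X; \theta)$ and condition on $X$, so the likelihood factor contributed by any complete case is $p(Y_i \mid X_i; \theta)$, independent of the marginal law of $X$. If missingness of predictors depends on observed predictor values but not on $Y$ given $X$ — strictly weaker than MCAR on the full vector — the complete-case likelihood $\prod_{i \in \mathrm{cc}} p(Y_i \mid X_i; \theta)$ is still a valid likelihood for $\theta$, so its maximizer is consistent. This is the content of Allison's remark cited at \cite{allison2002missing}; I would refer to Little--Rubin for the standard measure-theoretic version.

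The main obstacle is interpretive rather than technical: the per-context estimates obtained by listwise deletion are unbiased under MCAR, but this does not contradict the contextual pathology of Section \ref{sec:missing2contextual}, which is a failure of different context-wise estimates to be jointly realizable by any global joint — i.e., a failure at the \emph{gluing} step, not at the estimation step within any single context. Being careful that the present observation concerns per-context (single-table) parameter estimates, while the contextuality phenomena concern presheaves of such estimates, is the only subtle point; once that distinction is made, the classical missing-data arguments apply verbatim.
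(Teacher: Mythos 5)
The paper gives no proof of this observation at all: it is stated as a remark about statistical practice, with the entire mathematical burden delegated to the cited survey \cite{allison2002missing}. Your proposal therefore takes a genuinely different route by internalizing the claims into the paper's own formalism and actually proving them. Your decomposition is the standard Little--Rubin treatment and is correct: MCAR as $M \indep (X_1,\dots,X_n)$ so that conditioning on the complete-case event $\{M_i=0,\ i\in S\}$ preserves the law of $X_S$ (this is the precise sense of ``random subsample''); the expectation computation for $\pi_\phi(T)(\sigma_S)$ from \eqref{eq_summary_table2relation} showing the normalized counts target $\Pr(X_S=\sigma_S)$; and the conditional-likelihood argument showing that complete-case regression remains valid when missingness of predictors is independent of the response given the predictors, which is exactly the content of Allison's robustness remark and is strictly weaker than MCAR. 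What your approach buys is a verifiable statement tied to the paper's summary map, and your closing paragraph correctly isolates the one point where a careless reader could see a contradiction: unbiasedness is a per-context (single-table) property, while the contextuality of Section \ref{sec:missing2contextual} is a failure of the context-wise estimates to glue, so the two are compatible. Two small cautions: the normalized complete-case proportion is a ratio estimator, so it is consistent and conditionally unbiased given the complete-case count, but not exactly unbiased in finite samples --- ``will not bias parameter estimates'' should be read at the level of the estimand; and you should state explicitly that the positive-probability event $\Pr(M_S=0)>0$ is assumed so the conditioning and normalization are well defined. Neither affects the correctness of the argument.
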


The summary $\pi_\phi$ is also defined on any commit $T_v$ or snapshot $T_{\leq v}$ by simply ignoring the version information (since in either case there is at most one version corresponding to any index).

Note that given $S \subset U \subset [n]$, and a table on $U$, we can obtain a marginal relation of counts on $S$ by either $\pi_\phi \circ \tau_{S \leftarrow U} T$ or $\tau_{S \leftarrow U} \circ \pi_\phi T$.  In general these are unequal if $T$ has missing data, because the summary produced by applying $\pi$ first might discard rows whose restriction  $\tau_{S \leftarrow U}$ has no missing data, so that those rows would be counted if restriction were applied before summarization.

\begin{proposition}
Without missing data, restriction and summarization commute; with missing data they do not.
\end{proposition}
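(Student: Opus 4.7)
The plan is to unpack the two compositions at an arbitrary $\sigma_S \in \Sigma_S$ and identify which rows of a table $T$ on columns $U$ contribute to each. First I would fix a definition of $\tau_{S \leftarrow U}$ applied to a relation (as opposed to a table) on $\Sigma_U$ by pushforward, $(\tau_{S \leftarrow U} R)(\sigma_S) = \sum_{\sigma_U \in \Sigma_U,\ \tau_{S \leftarrow U}\sigma_U = \sigma_S} R(\sigma_U)$, which agrees with the usual marginalisation; note that the sum ranges only over $\Sigma_U$ (no NAs), because $\pi_\phi(T)$ is by construction a relation on $\Sigma_U$, not on $\tilde{\Sigma}_U$.

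Next I would rewrite each composition as a single sum over rows of $T$. For $\pi_\phi \circ \tau_{S \leftarrow U}$, applying the summary from \eqref{eq_summary_table2relation} to $\tau_{S \leftarrow U} T$ gives
\[
(\pi_\phi \circ \tau_{S \leftarrow U})(T)(\sigma_S) = \sum_{(v,i,\sigma_U,s)\in T:\ \tau_{S \leftarrow U}\sigma_U = \sigma_S} \phi(s),
\]
where $\sigma_U \in \tilde{\Sigma}_U$ is allowed to carry $\NA$ in any coordinate of $U\setminus S$. For $\tau_{S \leftarrow U} \circ \pi_\phi$, substituting the summary definition into the marginalisation gives
\[
(\tau_{S \leftarrow U} \circ \pi_\phi)(T)(\sigma_S) = \sum_{(v,i,\sigma_U,s)\in T:\ \sigma_U \in \Sigma_U,\ \tau_{S \leftarrow U}\sigma_U = \sigma_S} \phi(s),
\]
with the extra requirement $\sigma_U \in \Sigma_U$ (no NAs anywhere in $U$).

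The commuting direction follows immediately: if $T$ has no missing data, then every row already satisfies $\sigma_U \in \Sigma_U$, so the two index sets of the sums coincide and the outputs agree. For the non-commuting direction, I would exhibit the minimal counterexample: take $U = \{X_1,X_2\}$, $S=\{X_1\}$, binary states, and a single row $(v,i,(0,\NA),1)$. Then $\pi_\phi \circ \tau_{S\leftarrow U}(T)(0) = 1$ because restriction strips the $\NA$ and the surviving row has $S$-state $0$, while $\tau_{S \leftarrow U} \circ \pi_\phi(T)(0) = 0$ because $\pi_\phi$ is applied first and discards the row as it contains an $\NA$ in $U$. This is precisely the skip-$\NA$ versus restrict-first asymmetry underlying available-case analysis, and it is the only obstacle; there is no subtler phenomenon to control, so the proof is essentially a careful bookkeeping of which rows survive each operation.
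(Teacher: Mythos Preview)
Your argument is correct and matches the paper's reasoning: the paper does not give a separate proof block but relies on the sentence immediately preceding the proposition, which is exactly the observation you formalize---that $\pi_\phi$ applied first on columns $U$ discards rows with an $\NA$ in $U\setminus S$ that $\tau_{S\leftarrow U}$ would have salvaged. Your explicit unpacking of the two index sets and the minimal two-variable counterexample make this precise, so there is nothing to add.
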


Note that when $\mathbb{S}$ is a field and with the obvious vector space structure on tables, restriction and summarization are linear, and each represent a kind of observable of an underlying table.

\subsection{Restriction map for relations}
Given an inclusion $U \hookrightarrow S$, we define a summarizing restriction map $\rho_{S \leftarrow U}$ from $\mathbb{S}$-valued relations on $S$ to $\mathbb{S}$-valued relations on $U$ by
\begin{equation} \label{eq_relationrestriction}
\rho_{S \leftarrow U} :=  f \circ \pi_{\id} \circ \tau_{S \leftarrow U} \circ \ell
\end{equation}

What this map does is first lift ($\ell$) the relation to a table by assigning each tuple an index; then restrict with the index distinguishing duplicate tuples on the restricted alphabet; sum over the duplicates to obtain a summary which is again a relation; then forget $f$ the index ($\ell \circ f$ would reset the index).  When $\mathbb{S} = \mathbb{B}$ the Boolean semiring, the summation is just an OR, so this marginalization coincides with the usual notion of restriction of a relation.  When $\mathbb{S} = \N$ or $\R$, this marginalizes the relation.

\section{(Pre)sheaves of relations and tables, morphisms, and contextuality} \label{sec_contextual}

A database consists of tables as in the previous section, each associated to a set of columns or contexts $C$ drawn from variables $X_1, \dots, X_n$.  These sets can overlap and cover their union $[n]$.  This means that as observed by Abramsky, presheaves provide a natural way to think about generalizing a single table or relation to a database of related  tables or relations.  The contexts $C \in \mcM=\{C_1, \dots, C_k\}$ representing a cover of $[n]$ generate a partially ordered set, the open sets in a topological space which sits inside the poset of subsets of $[n]$.  This poset can be interpreted as a category $\mcC$.

Let $\mcC$ be a site (such as this topological space), where the morphisms are inclusion maps, and $\mcV$ be a concrete 
value category such as $\Set$; then an $\mcV$-valued presheaf on $\mcC$ is a contravariant functor $F\!:\mcC \ra \mcV$.

Elements $s \in F(U)$ are called sections, and a family giving such an element for each $U$ is called a family of local sections. An element of $F([n])$ is called a global section.  The maps $F(S \hookrightarrow U)$ are called restriction maps $F(U) \ra F(S)$.

\begin{defn} \label{def_compat_family}
Let $U$ be an open set and $(U_i)_{i \in J}$ a cover; a family $(s_i \in F(U_i))_{i \in J}$ of local sections is a {\em compatible family} if $s_i |_{U_i \cap U_j} = s_j |_{U_i \cap U_j}$ for all $i,j \in J$.  
\end{defn}

A {\em sheaf} is a presheaf satisfying the following two properties.
\begin{enumerate}
\item {\bf Locality:} If $s,t \in F(U)$ agree on every set of an open cover of $U$, $s|_{U_i} = t|_{U_i}$, then they are equal $s=t$.  A presheaf with this property is {\em separated}.
\item {\bf Gluing:}  Given a compatible family $(s_i \in F(U_i))_{i \in J}$, there exists a section $s \in F(U)$ restricting to all the $s_i$.
\end{enumerate}

\begin{defn}
  The adjective {\em contextual} describes a presheaf satisfying Locality but not Gluing, or a particular compatible family of local sections in a presheaf which serves as a counterexample to the Gluing condition.
\end{defn}


\subsection{Presheaves of tables and table-spaces}
There are two levels of presheaves and sheaves we will need, and several flavors (relations, tables, missing data, versions) in each level.  Let $U \subset [n]$.  At the first level, $T(U)$ is a table, and the sections of $T$ are rows; at the second $\scrT(U)$ is the space of all tables on $U$, and the sections of $\scrT$ are $T$s.

\begin{defn}\label{defn_tablepresheaf}
  Fix $n$ and states $\{\Sigma_i\}_{i = 1, \dots, n}$, and a measurement scenario $\mcM$. With respect to the topology generated by $\mcM$, we define a {\em compatible presheaf of tables} as follows.  Assign to each open set $U \subset [n]$ a versioned table with missing data $T(U)$, which is arbitrary except that if $s \in T(U)$, $t \in T(W)$ share the same version and index $i$, we must have $s|_{U \cap W} = t|_{U \cap W}$.  The restriction map  
  $T(U \hookrightarrow S)$ (as in Eq. \eqref{eq_tablerestriction}) is a map from row to row, $(v,i,\sigma_U,s) \mapsto (v,i, \sigma_U|_S,s)$.
\end{defn}

Then a section $s \in T(U)$ is a row or tuple $(v,i,\sigma_U,s)$, and together the tuples make up the table $T(U)$.  
The extra intersection condition in Definition \ref{defn_tablepresheaf} ensures that a family of sections that share a (version and) index $i$ are always a compatible family.

We should check that we have really defined a presheaf, i.e. that $T$ is a functor.  First, $T(U \hookrightarrow U)$ is the identity map on the table $T(U)$, because $\tau_{U\leftarrow U}$ is the identity on $\Sigma_U$.  Second, if $f\:U \hookrightarrow W$ and $g\!:W \hookrightarrow S$ are inclusion maps, $T(g \circ f) = T(g) \circ T(f)$, again because we work pointwise and this holds for state sets.  
  So we have a presheaf of tables.

A global section would be a single row, a single global state $(v,i,\sigma_{[n]}, s)$. 
The extra compatibility hypothesis in Definition \ref{defn_tablepresheaf} is there to prevent sections that share the same index but do not agree when restricted, $\sigma_U|_{U \cap W} \neq \sigma_W |_{U \cap W}$.  This is not quite enough; we want any sections which share an index to be essentially the same section (restrictions of a unique section on some larger set) 
The next sheafification proposition says that such a gluing can always be performed, and that after adding any such glued records, we have a sheaf.

\begin{proposition}
Every presheaf of tables can be completed to a sheaf by adding in the relative global sections obtained by gluing.
\end{proposition}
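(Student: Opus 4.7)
The plan is to verify the two sheaf axioms directly in terms of rows, since a section of the presheaf $T$ is a single tuple $(v,i,\sigma_U,s)$ and restriction only acts on the state coordinate by $\tau_{S\leftarrow U}$ (preserving $v$, $i$, and the semiring value $s$).

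First I would dispatch locality. Suppose $s=(v,i,\sigma_U,s)$ and $t=(v',i',\sigma'_U,s')$ are two rows in $T(U)$ that agree after restriction to every member of an open cover $\{U_j\}$ of $U$. Then $v=v'$, $i=i'$, and $s=s'$ because these coordinates are preserved by every $\tau_{U_j\leftarrow U}$. Moreover $\sigma_U|_{U_j}=\sigma'_U|_{U_j}$ for all $j$, and since $U=\bigcup_j U_j$, every coordinate of $\sigma_U$ is pinned down on some $U_j$, giving $\sigma_U=\sigma'_U$. Thus $T$ is already separated; no rows need to be removed or identified to obtain locality.

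Next I would describe the completion. Given a compatible family $(s_j\in T(U_j))_{j\in J}$ covering an open set $U$, compatibility means that whenever two rows $s_j=(v_j,i_j,\sigma_{U_j},s_j^\sharp)$ and $s_k$ share the same version $v$ and index $i$, their restrictions to $U_j\cap U_k$ coincide. Since restrictions fix the semiring value, a common value $s^\sharp$ is forced across the family; since restrictions are coordinate-wise, the states $\sigma_{U_j}$ agree on pairwise intersections, and so determine a unique state $\sigma_U\in\tilde\Sigma_U$ by the rule $\sigma_U|_{U_j}=\sigma_{U_j}$ (well-defined because $\{U_j\}$ covers $U$ and the $\sigma_{U_j}$ agree on overlaps). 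Define the glued row $\hat s := (v,i,\sigma_U,s^\sharp)$ and enlarge $T(U)$ by adjoining $\hat s$ for every such compatible family not already present. Do this for every open $U$ and every compatible family, producing $\hat T$.

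It then remains to check that $\hat T$ is still a presheaf satisfying the extra intersection hypothesis of Definition \ref{defn_tablepresheaf}, and that it now satisfies gluing. Functoriality is automatic: restriction is defined pointwise on states, and we have by construction $\hat T(U_j\hookrightarrow U)(\hat s) = s_j$ for each $j$, so $\hat T$ is closed under restriction and the restriction maps still compose as the identity/composition on state sets. The intersection condition in Definition \ref{defn_tablepresheaf} is preserved because any two added rows $\hat s\in\hat T(U)$, $\hat t\in\hat T(W)$ sharing $(v,i)$ arise from families whose rows all shared $(v,i)$, so they were mutually compatible to begin with, forcing $\hat s|_{U\cap W}=\hat t|_{U\cap W}$ by the same pointwise argument as above. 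Gluing now holds by construction, and locality is inherited from $T$, so $\hat T$ is a sheaf.

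The main obstacle is purely bookkeeping: verifying that adjoining glued rows does not create new incompatibilities among \emph{previously unrelated} sections that happen to share a version and index. This is handled by observing that any such new pair must come from two compatible families whose rows at $(v,i)$ are, by the intersection hypothesis on the original $T$, already restrictions of a single state; transitivity of agreement on overlaps then forces the two glued states to agree on $U\cap W$. No other genuine difficulty arises, so the completion is canonical (it is the sheafification applied to the restricted category of compatible families).
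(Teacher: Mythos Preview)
Your argument is correct and follows the same approach as the paper: verify locality directly from the fact that restriction acts only on the state coordinate (so a cover pins down every coordinate), then complete by adjoining the glued row $(v,i,\sigma_U,s)$ for every compatible family. The paper's own proof is considerably terser---it simply asserts that locality always holds, observes that the glued row inherits the shared $v$, $i$, $s$ and the patched-together state $\sigma_U$, and stops there---whereas you go further and check that $\hat T$ remains a presheaf, still satisfies the intersection hypothesis of Definition~\ref{defn_tablepresheaf}, and that no new incompatibilities are introduced. One small wording issue: you write that ``compatibility means that whenever two rows \dots\ share the same version $v$ and index $i$, their restrictions to $U_j\cap U_k$ coincide,'' but that is the intersection hypothesis of Definition~\ref{defn_tablepresheaf}, not the definition of a compatible family (Definition~\ref{def_compat_family}), which requires agreement on overlaps for \emph{all} pairs. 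This does no harm, since the genuine compatibility condition forces a shared $(v,i,s)$ across the family anyway (restriction preserves those coordinates), which is exactly what you use.
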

\begin{proof}
For a presheaf of tables, locality asserts that if two rows in a table $T(U)$ are such that their restrictions to every set of an open cover of $U$ are equal, they are equal.  This is always the case, so a presheaf of tables (and of relations) is always separated.  The gluing axiom says that given an open cover $(U_i)_{i \in J}$ of $U$ and a compatible family of rows $(s_i \in T(U_i))_{i \in J}$, there exists a single row $s \in T(U)$ restricting to all the $s_i$. This row will have the shared $v,i$, and $s$ of the restrictions, and its state will be the gluing $\sigma_U \in \Sigma_U$ of their states. We complete the presheaf by adding in all such glued rows.
\end{proof}

At the next, {\bf table-space} level, we define a functor $\scrT$ so that $\scrT(U)$ is the set of all (versioned, with missing data) tables on columns $U$.  A section is then a particular table, and a global section is a table on all columns $[n]$.  A family of local sections of $\scrT$ is a table $T(U) \in \scrT(U)$ for each $U$. 

\begin{defn} Fix states $\{\Sigma_i\}_{i = 1, \dots, n}$, and a measurement scenario $\mcM$. With respect to the topology generated by $\mcM$, define a {\em presheaf of table-spaces} by assigning to each open set $U \subset [n]$ the set $\scrT(U)$ of all possible tables (versioned and with missing data) on columns $U$, and to each inclusion $U \hookrightarrow S$ the table restriction map  $\tau_{S \leftarrow U}$ of \eqref{eq_tablerestriction}, sending tables $T(U) \in \scrT(U)$ to tables $T(U)|_S$. 
\end{defn}

\begin{proposition}
A presheaf of table-spaces is in fact a presheaf.
\end{proposition}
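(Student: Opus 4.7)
The plan is to verify the two functoriality axioms for $\scrT$ viewed as a contravariant functor from the poset of open subsets of $[n]$ (ordered by inclusion) to $\Set$. The decisive observation is that the table restriction map in \eqref{eq_tablerestriction} acts pointwise on the rows of a table, touching only the state component $\sigma_U$ and leaving the version $v$, index $i$, and semiring value $s$ alone. Thus both functoriality axioms for $\scrT$ reduce to the corresponding elementary facts about the state-set maps $\tau_{S \leftarrow U}\!:\tilde{\Sigma}_U \ra \tilde{\Sigma}_S$ introduced earlier in the section.

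Concretely, I would carry out three checks. First, for the trivial inclusion $U \subset U$, the state-set map $\tau_{U \leftarrow U}$ drops no coordinates and is the identity on $\tilde{\Sigma}_U$; applying it pointwise via \eqref{eq_tablerestriction} yields the identity on $\scrT(U)$. Second, for a chain $S \subset W \subset U$, the state-level identity $\tau_{S \leftarrow U} = \tau_{S \leftarrow W} \circ \tau_{W \leftarrow U}$ holds because dropping the coordinates $U \setminus W$ and then $W \setminus S$ produces the same state as dropping $U \setminus S$ in a single step; composing with the identity action on $(v,i,s)$, this lifts row-by-row to the equality $\tau_{S \leftarrow U} = \tau_{S \leftarrow W} \circ \tau_{W \leftarrow U}$ of maps $\scrT(U) \ra \scrT(S)$. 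Third, I would verify that the restriction $\tau_{S \leftarrow U}$ actually lands in $\scrT(S)$, i.e., that the image is a valid versioned table with missing data on columns $S$. This amounts to checking that the injectivity of $\pi_{V \times I}\!:T \ra V \times I$ is preserved, which is immediate: restriction leaves the set of $(v,i)$-pairs in the table unchanged and attaches to each a unique restricted state, so no new collisions can occur.

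The one genuinely delicate aspect is notational rather than mathematical: being careful with the direction of $\tau_{S \leftarrow U}$ (which sends $\tilde{\Sigma}_U \to \tilde{\Sigma}_S$ when $S \subset U$), with whether we view a table sparsely or densely, and with how $\NA$-symbols are carried through restriction transparently. Once these conventions are fixed, the argument is a two-line functoriality check building on the state-set picture already established, with no substantive obstacle to overcome.
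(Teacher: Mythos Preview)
Your proposal is correct and matches the paper's approach: the paper gives no explicit proof for this proposition, but the intended argument is the functoriality check already carried out a few lines earlier for presheaves of tables (identity because $\tau_{U\leftarrow U}$ is the identity on states, and composition because restriction acts pointwise on rows and composes correctly on state sets). Your added third check, that $\tau_{S\leftarrow U}$ lands in $\scrT(S)$ by preserving injectivity of $\pi_{V\times I}$, is a sensible completeness detail the paper leaves implicit.
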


A compatible presheaf of tables is a more general notion than a compatible family of local sections of $\scrT$, and more appropriate for a model of the state of a distributed data collection system.  In the extreme, a situation such as a network partition can be represented as a compatible presheaf of tables with disjoint version-index sets appearing in tables $T(U)$, $T(W)$ even if $U \cap W \neq \emptyset$, while this is impossible in a compatible family of local sections of $\scrT$ (although permitted in a family of sections of $\scrT$).

\begin{proposition}
Every compatible family of local sections of $\scrT$ for $[n]$ is a compatible presheaf of tables (Definition \ref{defn_tablepresheaf}). 
The converse holds if 
every $T(U_i)$ has the same set of version and index prefixes.
\end{proposition}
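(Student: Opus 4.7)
The plan is to check each direction by reconciling two notions of compatibility: equality of whole tables (for local sections of $\scrT$) versus row-wise agreement with matching version-index prefixes (for compatible presheaves of tables in the sense of Definition \ref{defn_tablepresheaf}).

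For the forward direction, start with a compatible family $(T(U_i))_{i \in J}$ of local sections of $\scrT$ indexed by a cover of $[n]$, and extend to an assignment on every open $U$ in the topology generated by $\mcM$ by $T(U) := T(U_i)|_U$ for any $U_i \supseteq U$; this is well-defined by compatibility on intersections, and functoriality follows from the pointwise associativity of the row-wise state restriction $\tau$. To verify the compatibility condition of Definition \ref{defn_tablepresheaf}, take rows $s = (v,i,\sigma_U, r) \in T(U)$ and $t = (v,i,\sigma_W, r') \in T(W)$ sharing prefix $(v,i)$, lift them to tuples in some $T(U_\alpha)$ and $T(U_\beta)$ with $U_\alpha \supseteq U$ and $U_\beta \supseteq W$, and invoke the table equality $T(U_\alpha)|_{U_\alpha \cap U_\beta} = T(U_\beta)|_{U_\alpha \cap U_\beta}$. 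Since the projection to $V \times I$ is injective on any table, the restricted tuples with prefix $(v,i)$ coincide, and further restricting to $U \cap W$ yields $s|_{U\cap W} = t|_{U\cap W}$.

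For the converse, suppose $T$ is a compatible presheaf of tables in which every $T(U_i)$ has the same set $\Pi$ of $(v,i)$-prefixes. Table restriction via \eqref{eq_tablerestriction} preserves version and index, so both sides of the desired equality $T(U_i)|_{U_i\cap U_j} = T(U_j)|_{U_i\cap U_j}$ have exactly $\Pi$ as their $(v,i)$-prefix set. For each $(v,i) \in \Pi$ the two restricted tuples share the prefix $(v,i)$ and live in the ambient table $T(U_i \cap U_j)$; the compatibility clause of Definition \ref{defn_tablepresheaf} then forces their $\sigma$- and semiring-components to agree. Hence the two tables have identical sets of tuples and so are equal in $\scrT(U_i \cap U_j)$.

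The main conceptual obstacle is the gap between ``two tables agree as sets of tuples'' and ``two tables agree tuple-by-tuple on their common prefixes'': a local section of $\scrT$ is rigidly identified with a single table, whereas a compatible presheaf of tables is a looser gadget in which different $T(U_i)$ may carry different $(v,i)$-prefixes (as in a network-partition scenario where one agent holds rows another lacks on overlapping contexts). The shared-prefix hypothesis in the converse is precisely what rules out such discrepancies and forces the two formalisms to coincide.
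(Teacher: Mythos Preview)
Your proof is correct and follows essentially the same approach as the paper: injectivity of the version--index projection together with table-equality on intersections yields the forward direction, and row-wise compatibility plus the shared-prefix hypothesis yields the converse. You are slightly more explicit in extending the assignment from the cover $\{U_i\}$ to all open sets before invoking Definition~\ref{defn_tablepresheaf}, but the core argument is identical to the paper's.
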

\begin{proof}

A compatible family of local sections of $\scrT$ for $[n]$ is a cover $U_i$ of $[n]$ with sections $T(U_i) \in \scrT(U_i)$ such that for all $i,j$, we have  
$T(U_i)|_{U_i \cap U_j} = T(U_j)|_{U_i \cap U_j}$.  Suppose that $s_i \in T(U_i)$ and $s_j \in T(U_j)$ have the same index and version.  The restriction maps preserve version-index sets, and this $(v,i)$ prefix is unique in $T(U_i)$ and also in $T(U_j)$, so the unique $(v,i)$-indexed row in the left restriction comes from $s_i$ and the unique $(v,i)$ indexed row in the right restriction comes from $s_j$.  Since this is a compatible family of local sections of $\scrT$, these rows must be equal for the restrictions to be equal.

On the other hand, suppose we have a compatible presheaf of tables $T$, but $T(U_i)|_{U_i \cap U_j} \neq T(U_j)|_{U_i \cap U_j}$.  Suppose first the two restrictions have the same set of $(v,i)$ prefixes in their constituent tuples.  Then there is some $(v,i)$-prefixed tuple in the left restriction whose state or semigroup element is unequal to a tuple in the right restriction with the same $(v,i)$ prefix.  Since restriction preserves version and index, this contradicts $T$ a compatible presheaf of tables.

Next suppose the two restrictions have unequal sets of $(v,i)$ prefixes; split the restrictions into shared and unique parts based on prefix. The shared parts are equal by the previous argument; the rest can be arbitrary.
\end{proof}

When is a presheaf of table-{\em spaces} a sheaf?  
For a presheaf of table-{\em spaces}, locality asserts that if two tables $T(U),T'(U) \in \scrT(U)$ have $T(U)|_{U_i} = T'(U)|_{U_i}$ in every $U_i$ of an open cover of $U$ are equal, $T(U) = T'(U)$. Because indexing makes equality quite strict, even if the open cover consists of single variables, this holds, so a presheaf of table-spaces is always separated. 
The gluing axiom says that given an open cover $(U_i)_{i \in J}$ of $U$ and a compatible family of tables $(T(U_i) \in \scrT(U_i))_{i \in J}$, there exists a single table $T(U) \in F(U)$ restricting to all the $T(U_i)$. Again since our notion of restriction for tables is index-preserving and this condition is stricter that being a compatible presheaf of tables, such a gluing can always be constructed.  The glued section may not already appear in $\scrT(U)$, so again we can add it to complete a presheaf of table-spaces $\scrT$ to a sheaf.

Thus contextuality will be obtained from presheaf versions of relational summaries of tables.  

\subsection{Presheaves of relation-spaces}

Now we turn to the case of relations.  The uniqueness and lack of indices, versions,  and missing data makes these simpler, but also more complex: the restriction map is no longer pointwise (index-preserving), but necessarily involves summing over indices.  The reference \cite{abramsky2013relational} deals with the case of a family of local sections of a presheaf of semiring-valued relation-spaces.  






\begin{defn} \label{def_presheaf_relation_spaces}
Fix $n$ and states $\{\Sigma_i\}_{i = 1, \dots, n}$, and a measurement scenario $\mcM$. With respect to the topology generated by $\mcM$, we define a {\em presheaf of relation-spaces} $\scrR$ as follows.  Assign to each open set $U \subset [n]$ the set $\scrR(U) = \Hom_{\Set}(\Sigma_U,\mathbb{S})$ of all relations $R: \Sigma_U \ra \mathbb{S}$.  To each inclusion $U \rightarrow S$ associate the relation restriction map $\rho_{S \leftarrow U}$ of \eqref{eq_relationrestriction} as a map from $\Hom_{\Set}(\Sigma_U,\mathbb{S})$ to $\Hom_{\Set}(\Sigma_S,\mathbb{S})$.
\end{defn}

The Bell scenario is a family of compatible local sections of a presheaf of $\R_{\geq 0}$-relation-spaces which has no global section.  An example of a compatible family with a global section is a joint probability distribution on discrete random variables $X_1, \dots, X_n$, together with a collection of marginal distributions defined by the subsets $\mcM$.

\subsection{Sheafy summaries}

Recall the summarization map $\pi_\phi$ of \eqref{eq_summary_table2relation} sending a $\mathbb{S}$-table on $U$ to a $\phi(\mathbb{S})$-relation on $U$, given by $\pi_\phi(T)(\sigma_U) = \sum_{(i,\sigma'_U,s) \in T: \sigma'_U = \sigma_S} \phi \circ T(i,\sigma'_U)$.  
This defines a map $\pi: \scrT \mapsto \scrR$ from a presheaf of table-spaces to a presheaf of relation-spaces by composition, sending $U$ to the table $\scrT(U)$ to the relation $\pi(\scrT(U))$.  Note that beginning with a table with missing data, $\pi \circ \tau \neq \rho \circ \pi$ because the latter map neglects missing data that could be used if restriction was applied first.  
It is interesting to ask if there is a way to modify the treatment of missing data in $\pi$ to make this commute.



\begin{theorem}\label{thm_summaryofMDTableiscontextual}
  When the tables have missing data, a compatible family of local sections which glues to a global section can be sent by $\pi$ to a compatible family of local sections which does not glue.  
\end{theorem}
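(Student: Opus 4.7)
The plan is to exhibit the claim by reusing the Table \ref{table:BellNA} construction from Section \ref{sec:missing2contextual}, viewed through the presheaf formalism just established.

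First, I would let $[n]=\{A,B,A',B'\}$ with the measurement scenario $\mcM=\{C_{AB},C_{AB'},C_{A'B},C_{A'B'}\}$, and take $T\in\scrT([n])$ to be the full sixteen-record table of Table \ref{table:BellNA} (interpreted as a versioned table with a trivial poset of versions and with $\mathbb{S}=\mathbb{B}$, say). Define a family of local sections of the presheaf of table-spaces $\scrT$ by $T(C):=\tau_{C\leftarrow[n]}(T)$ for each $C\in\mcM$. Since restriction is functorial and all four restrictions come from the same global $T$, the compatibility condition $T(C)|_{C\cap C'}=T(C')|_{C\cap C'}$ holds automatically, and $T$ itself is a global section gluing the family. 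This uses only the fact that the restriction maps $\tau$ in Definition \ref{defn_tablepresheaf} are pointwise on rows.

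Second, I would push the family through the summary morphism of presheaves $\pi:\scrT\to\scrR$ with $\phi=\mathbf{1}_{\mathrm{True}}$, obtaining local sections $\pi(T(C))\in\scrR(C)$ for each context. By direct computation, exactly as performed in Section \ref{sec:missing2contextual}, available-case analysis applied to $T(C_{AB}),T(C_{A'B}),T(C_{AB'}),T(C_{A'B'})$ recovers the Bell marginals $m_{AB},m_{A'B},m_{AB'},m_{A'B'}$. I would then verify compatibility of $(\pi(T(C)))_{C\in\mcM}$ in $\scrR$: for any single variable $X$ shared by two contexts $C,C'$, the relation restrictions $\rho_{C\leftarrow\{X\}}\pi(T(C))$ and $\rho_{C'\leftarrow\{X\}}\pi(T(C'))$ both yield the singleton margin $(4,4)$, as already observed after the display of the $m$'s. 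This is the no-signaling step.

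Third, I would invoke the nonexistence argument from Section \ref{sec:missing2contextual} verbatim: any global section of $\scrR$ over $[n]$ restricting via $\rho$ to the four $m$'s would be an $\mathbb{N}$- (or $\mathbb{Z}$-) relation on $\Sigma_{[n]}$, which the constraints $(C1)$--$(C4)$ on the unknowns $x_1,\dots,x_8$ force to satisfy $x_8-x_4=2$ with $x_4,x_8\in\{0,1\}$, a contradiction. Hence the compatible family $(\pi(T(C)))_{C\in\mcM}$ admits no global section, so $\pi$ sends a gluable family to a nongluable one.

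The only nontrivial step is checking compatibility at the relation level after applying $\pi$; the gluability of the original table-level family is immediate since it comes from a single table, and the obstruction to gluing the image family is exactly the Bell argument already in the paper. I would therefore organize the proof as: (i) construct $T$ and its restrictions, (ii) compute $\pi(T(C))$ and verify single-variable compatibility, (iii) quote the $x_4,x_8$ obstruction to global gluing.
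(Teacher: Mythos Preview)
Your proposal is correct and takes exactly the paper's approach: the paper's proof is the single sentence ``An example is given in Table~\ref{table:BellNA},'' and your three-step elaboration simply unpacks that example in the presheaf language of Section~\ref{sec_contextual}. One small slip: drop the parenthetical ``(or $\mathbb{Z}$-)'' in step~(iii), since the Bell family \emph{does} glue over $\mathbb{Z}$ (signed relations); the $(C1)$--$(C4)$ argument depends on representing the global relation as an eight-row table with $x_i\in\{0,1\}$, which is only forced for $\mathbb{N}$-valued relations.
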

\begin{proof}
An example is given in Table \ref{table:BellNA}.
\end{proof}

\begin{observation} \label{obs_avail}
  In the {\em available case analysis} (or ``pairwise deletion'') method for dealing with missing data, parameters are estimated based on available data, even if that means the sample size varies. Thus if a set of parameters depends only on the relation of counts for the variables in $U$, the map $\pi_U \circ \tau_{U \leftarrow [n]}$ is applied to obtain it.  Thus what we describe here maps closely to the available case analysis.  
Theorem \ref{thm_summaryofMDTableiscontextual} shows that available case analysis can produce a contextual empirical model.
\end{observation}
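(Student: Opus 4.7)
The plan is to construct an explicit counterexample, namely Table~\ref{table:BellNA}, and verify the three required properties: (i) it is a global section of the presheaf of table-spaces $\scrT$ (with missing data); (ii) its restrictions via $\tau_{C \leftarrow [n]}$ to the contexts $C_{AB}, C_{AB'}, C_{A'B}, C_{A'B'}$ form a compatible family in $\scrT$ that tautologically glues back to it; and (iii) applying the summary $\pi$ to each restriction yields a compatible family of relations in $\scrR$ which admits no global section.

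First I would observe that (i) is immediate: Table~\ref{table:BellNA} is just a single table over the full column set $\{A,B,A',B'\}$, and so it is a global section of $\scrT$ by definition. For (ii), restriction $\tau_{C \leftarrow [n]}$ is defined pointwise on tuples (Eq.~\eqref{eq_tablerestriction}) and so commutes with itself; therefore the four restricted tables $\tau_{C \leftarrow [n]}(T)$ agree on every double intersection and indeed come from a common global section, so they form a compatible family that glues. No work is needed here beyond unwinding definitions.

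The substantive step is (iii), which splits into two parts. First I would compute $\pi_\phi \circ \tau_{C \leftarrow [n]}$ applied to Table~\ref{table:BellNA} for each of the four contexts, using $\phi = \id_\N$. Because $\pi$ skips any row with an $\NA$ in the relevant columns, this is a direct row-by-row tally; I would check that the four resulting $\N$-relations are exactly $m_{AB}, m_{A'B}, m_{AB'}, m_{A'B'}$ displayed at the start of Section~\ref{sec:missing2contextual}. Then I would verify that these four relations form a compatible family under the relation-restriction maps $\rho$ of Eq.~\eqref{eq_relationrestriction}: each single-variable marginal (e.g.\ the $A$-marginal from $m_{AB}$ vs.\ from $m_{AB'}$) equals $(4,4)$, as already noted in the paper.

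Finally, to show that this compatible family of local sections of $\scrR$ does not glue to a global section, I would invoke the argument already given in Section~\ref{sec:missing2contextual}: any global $\N$-relation on $\Sigma_{\{A,B,A',B'\}}$ summing to $8$ and projecting to $m_{AB}$ must place all mass on rows with $A=B$, which forces the contradictory constraints $(C1)$--$(C4)$ leading to $x_8 - x_4 = 2$ with $x_4,x_8 \in \{0,1\}$. The main obstacle, such as it is, is conceptual rather than computational: one must be careful to distinguish the two presheaves involved --- $\scrT$, in which our global section lives and restrictions are pointwise and index-preserving, and $\scrR$, where restriction $\rho$ sums out indices and thereby destroys information about which rows had $\NA$s. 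It is precisely the failure $\pi \circ \tau \neq \rho \circ \pi$ noted just before the theorem that allows a gluable family in $\scrT$ to be pushed forward to an ungluable family in $\scrR$, and Table~\ref{table:BellNA} realizes this failure concretely.
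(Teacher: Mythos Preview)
Your proposal is correct and takes essentially the same approach as the paper: the Observation itself carries no separate proof, and the Theorem it invokes is proved in the paper by the single line ``An example is given in Table~\ref{table:BellNA},'' with the verification of the marginal counts and the impossibility of a global section already carried out in Section~\ref{sec:missing2contextual}. Your write-up simply makes explicit the three-part structure (global section in $\scrT$, compatible gluable family in $\scrT$, compatible non-gluable image in $\scrR$) that the paper leaves implicit by pointing back to that section.
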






In the other direction, we can view the use of tables, missing data, and versions, and even negative probabilities as means to resolve the apparent contradictions in a contextual family of $\mathbb{S}$-relations and obtain a global section. 


\subsection{Versions and $\pi$-compatible concurrent snapshots} \label{sec_vers_conc_snap}
Let us now model the apparent state of the world from the point of view of a network of spatially-distributed computational agents, each of which collects data from its own neighborhood (which overlaps that of others) at a single instant (more precisely, at spacetime events which are spacelike relative to each other so are simultaneous in some frame).  The simultaneous shared state of this network will be represented as a compatible family of local sections of a presheaf of table-spaces. This setup will also be useful in describing distributed systems which share entangled quantum state.

Let $V$ be a version poset (a causet), and $\omega\!: \mcM \ra V$ be a map assigning a version to each maximal measurement context.  Each $C_k \in \mcM$ models the largest set of variables that can be observed at one event (a point in spacetime, or more precisely in our Section \ref{sec:time} model a node in a spacegraph and an interval in the local clock at that node).  On the other hand $V$ represents the causet of events in a patch of spacetime. Hence $\omega$ connects the two, defining an event at which each context is observed ($\omega$ will be injective if the contexts are truly maximal).  When $\omega(\mcM)$ is an antichain in $V$, we have observations which are completely spacelike: no observation can know the outcome of any other and any two can be in write skew.

We would like to define a family of tables representing what these agents see.  Given a compatible presheaf of versioned tables (with or without missing data) $T$ and an open set $U$,  define a family of tables $T^\omega$ by setting $T^\omega(U) = T_{\leq \omega(U)}(U)$ 
so in particular $T^\omega(C_j) = T_{\leq \omega(C_j)}(C_j)$ for each context.

We will need a compatible family of local sections of a table- or relation-space to obtain contextuality.  Thus we want restrictions to agree, $T^\omega(U)|_{U \cap V} = T^\omega(V)|_{U \cap V}$, or at least summaries of these restrictions to agree.  But by definition they can disagree on version number (and always will if $U,V$ are maximal contexts and $\omega$ is injective).

On the other hand, if we ignore the version numbers, it is easy to have a restriction that is unequal if e.g.\ agent $C_{AB}$ and agent $C_{AB'}$ both update the value of $A$ in the row with index $i$ concurrently. 

Thus to make the compatibility condition meaningful, we should apply it to a summary of some kind, such as the summary to $\N$-relations (tables of counts).

Fixing our sheafy summary $\pi$, we want to require that $\pi(T^\omega)$ is a compatible presheaf or a family of compatible sections of a sheaf of relation-spaces (or table spaces): each context can see a different snapshot, but they must agree on any overlaps up to the summary $\pi$.  Generally this will be a relational summary (forgetting index and version).

Given a table $T$ or relation $R$ for every set in a cover (e.g.\ every context), we can attempt to generate a family of tables (relations) for each $U$ by applying restriction maps. This family may be ill defined, because what to assign for $T(U)$ is unclear if $U \subset C_i \cap C_j$ but $T(C_i)|U \neq T(C_j)|U$.  The next definition requires this generated family of relations to be well defined for $\pi T$.

\begin{defn} \label{def_pi_compcon_snap}
  Fix states, contexts $\mcM$, a $V$-versioned presheaf $T$ of tables (possibly with missing data), a map $\omega\:\mcM \ra V$ assigning a version to each maximal context, and a sheafy summary map $\pi$ sending each $T(U)$ to a $\mathbb{S}$-relation. 
  The family $T^\omega$ of $|\mcM|$ tables given by $T^\omega(C_j) = T_{\leq \omega(C_j)}(C_j)$ for each context $C_j$ 
  is a {\em $\pi$-compatible concurrent snapshot} if $\pi(T^\omega)$ generates a compatible family of local sections of a presheaf of relation-spaces.
\end{defn}

Thus a $\pi$-compatible concurrent snapshot represents the ``simoultaneous'' viewpoint of several agents, each responsible for a context, who agree as much as possible (up to $\pi$) while maintaining independent versions.


In Example \ref{example_versionbell}, we see that such a summary $\pi T^\omega$ of a $\pi$-compatible concurrent snapshot can be contextual, even if there is no missing data, $T^\omega$ glues to a global section, and the summary of each local snapshot $T_{\leq \omega(C_j)}(C_j)$ is noncontextual.  Thus contextuality identical to that in Bell's theorem can arise from staleness (write skew) alone.  This is relevant to any algorithm in which distributed agents make decisions knowing that they only have access to the information in their past light cone, but cannot lock, abort \cite{cahill2009serializable} or wait to preclude the possibility of write skew.


\section*{Acknowledgments}
The author gratefully acknowledges the support of AFOSR Grant FA9550-16-1-0300.
\bibliographystyle{custom} 
\bibliography{bibfile}

\begin{thebibliography}{10}

\bibitem{abramsky2013relational}
S.~Abramsky.
\newblock Relational databases and bell’s theorem.
\newblock In {\em In search of elegance in the theory and practice of
  computation}, pages 13--35. Springer, 2013.

\bibitem{abramsky2015contextuality}
S.~Abramsky, R.~S. Barbosa, K.~Kishida, R.~Lal, and S.~Mansfield.
\newblock Contextuality, cohomology and paradox.
\newblock {\em arXiv preprint arXiv:1502.03097}, 2015.

\bibitem{abramsky2016quantifying}
S.~Abramsky, R.~S. Barbosa, and S.~Mansfield.
\newblock Quantifying contextuality via linear programming.
\newblock {\em Informal Proceedings of Quantum Physics \& Logic}, 2016.

\bibitem{abramsky2011sheaf}
S.~Abramsky and A.~Brandenburger.
\newblock The sheaf-theoretic structure of non-locality and contextuality.
\newblock {\em New Journal of Physics}, 13(11):113036, 2011.

\bibitem{abramsky2011cohomology}
S.~Abramsky, S.~Mansfield, and R.~S. Barbosa.
\newblock The cohomology of non-locality and contextuality.
\newblock {\em arXiv preprint arXiv:1111.3620}, 2011.

\bibitem{adya1999weak}
A.~Adya.
\newblock {\em Weak consistency: a generalized theory and optimistic
  implementations for distributed transactions}.
\newblock PhD thesis, Massachusetts Institute of Technology, 1999.

\bibitem{agresti2013categorical}
A.~Agresti.
\newblock {\em Categorical data analysis}.
\newblock Wiley, 2013.

\bibitem{allison2002missing}
P.~D. Allison.
\newblock Missing data: Quantitative applications in the social sciences.
\newblock {\em British Journal of Mathematical and Statistical Psychology},
  55(1):193--196, 2002.

\bibitem{bernstein1981concurrency}
P.~A. Bernstein and N.~Goodman.
\newblock Concurrency control in distributed database systems.
\newblock {\em ACM Computing Surveys (CSUR)}, 13(2):185--221, 1981.

\bibitem{cahill2009serializable}
M.~J. Cahill, U.~R{\"o}hm, and A.~D. Fekete.
\newblock Serializable isolation for snapshot databases.
\newblock {\em ACM Transactions on Database Systems (TODS)}, 34(4):20, 2009.

\bibitem{corbett2013spanner}
J.~C. Corbett, J.~Dean, M.~Epstein, A.~Fikes, C.~Frost, J.~J. Furman,
  S.~Ghemawat, A.~Gubarev, C.~Heiser, P.~Hochschild, et~al.
\newblock Spanner: Google’s globally distributed database.
\newblock {\em ACM Transactions on Computer Systems (TOCS)}, 31(3):8, 2013.

\bibitem{kulkarni2014logical}
S.~Kulkarni, M.~Demirbas, D.~Madeppa, A.~Bharadwaj, and M.~Leone.
\newblock Logical physical clocks and consistent snapshots in globally
  distributed databases, 2014.

\bibitem{morton2013generalized}
J.~Morton and J.~Turner.
\newblock Generalized counting constraint satisfaction problems with
  determinantal circuits.
\newblock {\em arXiv preprint arXiv:1302.1932, to appear in Linear Algebra and
  its Applications}, 2013.

\bibitem{narens2014alternative}
L.~Narens.
\newblock Alternative probability theories for cognitive psychology.
\newblock {\em Topics in cognitive science}, 6(1):114--120, 2014.

\bibitem{reed1978naming}
D.~P. Reed.
\newblock Naming and synchronization in a decentralized computer system.
\newblock 1978.

\bibitem{rosebrugh2005generic}
R.~Rosebrugh, N.~Sabadini, and R.~Walters.
\newblock Generic commutative separable algebras and cospans of graphs.
\newblock {\em Theory and applications of categories}, 15(6):164--177, 2005.

\bibitem{yabandeh2012critique}
M.~Yabandeh and D.~G{\'o}mez~Ferro.
\newblock A critique of snapshot isolation.
\newblock In {\em Proceedings of the 7th ACM european conference on Computer
  Systems}, pages 155--168. ACM, 2012.

\end{thebibliography}
\end{document}